\def\ps@headings{%
\def\@oddhead{\mbox{}\scriptsize\rightmark \hfil \thepage}%
\def\@evenhead{\scriptsize\thepage \hfil \leftmark\mbox{}}%
\def\@oddfoot{}%
\def\@evenfoot{}}
\newcommand{\com}[1]{\textbf{\color{blue} (COMMENT: #1)}} 
\newcommand{\com}[1]{}
\newtheorem{theorem}{Theorem}
\newtheorem{definition}{Definition}
\begin{document}

\title{Safe Carrier Sensing Range in CSMA Network under Physical Interference Model}

\author{\IEEEauthorblockN{Liqun Fu, Soung Chang Liew, Jianwei
Huang}
\IEEEauthorblockA{Department of Information Engineering\\
The Chinese University of Hong Kong\\
Shatin, New Territories, Hong Kong\\
Email: \{lqfu6,soung,jwhuang\}@ie.cuhk.edu.hk}

}

\maketitle

\begin{abstract}

In this paper, we study the setting of carrier-sensing range in
802.11 networks under the (cumulative) physical interference model.
Specif\/ically, we identify a carrier-sensing range that will
prevent collisions in 802.11 networks due to carrier-sensing failure
under the physical interference model. We f\/ind that the
carrier-sensing range required under the physical interference model
must be larger than that required under the protocol (pairwise)
interference model by a multiplicative factor. For example, if the
SINR requirement is $10dB$ and the path-loss exponent is $4$, the
factor is $1.4$. Furthermore, given a fixed path-loss exponent of 4,
the factor increases as the SINR requirement increases. However, the
limit of the factor is $1.84$ as the SINR requirement goes to
infinity.
\end{abstract}

\begin{keywords}
carrier-sensing range, physical interference model, SINR
constraints.
\end{keywords}

\IEEEpeerreviewmaketitle

\section{Introduction}

In 802.11, carrier sensing is designed to prevent concurrent
transmissions that can corrupt each other. The setting of the
carrier-sensing range is crucial. Too large a carrier-sensing range
will limit spatial reuse, while too small a carrier-sensing range
will fail to prevent collisions.

To date, most studies on the proper setting of the carrier-sensing
range \cite{KXu,LiBin,PCNg,SXu,Vasan} are based on the pairwise
interference model (referred to as the protocol interference model
in \cite{prgupta}). For a link under the pairwise interference
model, the interferences from the other links are considered one by
one. If the interference from each of the other links on the link
concerned does not cause a collision, then it is assumed that there
is no collision overall. In particular, the pairwise interference
model does not take into account the cumulative effects of the
interferences from the other links. The physical interference model
\cite{prgupta}, on the other hand, considers the cumulative
interferences and thus is a more accurate model.

Under the pairwise interference model, hidden-node collision happens
if the transmitters of two links do not carrier sense each other,
but the two links are close enough to interfere with each other.
Ref. \cite{LiBin} established the carrier-sensing range required to
prevent hidden-node collisions in 802.11 networks under the pairwise
interference model. This resulting carrier-sensing range is too
optimistic when the more accurate physical interference model is
considered instead. For a particular link, although the
carrier-sensing range is set large enough with respect to the
interference from each of the other links, the cumulative
interference powers from all the other links may still corrupt the
transmission on the link concerned. Since the collision is not due
to a specific ``hidden'' node, the conventional term of ``hidden
node collision'' does not quite apply to this situation that arises
under the physical interference model. We define a new term
``missed-carrier-sensing collision'' to describe this phenomenon.
Missed-carrier-sensing collisions occur when the transmitter does
not sense the transmissions of the other links, but the cumulative
interference power of all the other concurrent transmissions will
interfere the transmission on this particular link. This paper is
dedicated to the study of the required carrier-sensing range to
prevent missed-carrier-sensing collisions in 802.11 networks under
the physical interference model.

\section{System Model}\label{system}

\subsection{Network and Physical Interference Model}\label{Model}

We represent a wireless network by a set of directed links
$\mathcal{L}=\{l_i, 1\le i \le {\left| \mathcal{L} \right|}\}$. Let
$\mathcal{T}=\{T_i, 1\le i \le {\left| \mathcal{L} \right|}\}$ and
$\mathcal{R}=\{R_i, 1\le i \le {\left| \mathcal{L} \right|}\}$
denote the set of transmitting nodes and the set of receiving nodes,
respectively. A receiver decodes its signal successfully if and only
if the signal-to-interference-plus-noise ratio (SINR) requirement at
the receiver is above a certain threshold. We adopt the physical
interference model, where the interference is the sum total of the
powers the receiver receives from all transmitters except its own
transmitter. We assume that radio signal propagation obeys the
log-distance path model with path loss exponent $\alpha>2$. The path
gain $G(T_i,R_j)$ from transmitting node $T_i$ to receiving node
$R_j$ follows a geometric model
\begin{equation}
G(T_i,R_j)={d(T_i,R_j)}^{-\alpha}, \label{channelm}
\end{equation}
where $d(T_i,R_j)$ is the Euclidean distance between nodes $T_i$ and
$R_j$.

In 802.11, each data transfer on a link $l_i$ consists of a DATA
frame in the forward direction (from transmitter $T_i$ to receiver
$R_i$) followed by an ACK frame on the reverse direction (from $R_i$
to $T_i$). The data transfer on link $l_i$ is said to be successful
if and only if both the DATA frame and the ACK frame are correctly
received. So under the physical interference model, the conditions
for successful transmissions on link $l_i$ are
\begin{equation} \frac{P\cdot
G(T_i ,R_i )}{N + \sum\limits_{l_j \in \mathcal{L}'}{P\cdot G(S_j
,R_i )} } \ge {\gamma}_0, \quad \text{(DATA frame)}\label{SINR1}
\end{equation}
and
\begin{equation} \frac{P\cdot
G(R_i ,T_i )}{N + \sum\limits_{l_j \in \mathcal{L}''}{P\cdot G(S_j
,T_i )} } \ge {\gamma}_0, \quad \text{(ACK frame)}\label{SINR2}
\end{equation}
where $P$ is the transmission power level, $N$ is the average noise
power, and $\gamma_0$ is the SINR threshold for correct reception.
We assume that all nodes use the same transmit power $P$ and adopt
the same SINR threshold $\gamma_0$. Let $\mathcal{L}'$
($\mathcal{L}''$) denote the set of links that transmit concurrently
with the DATA (ACK) frame on link $l_i$. For a link $l_j$ in
$\mathcal{L}'$ or $\mathcal{L}''$, the interference could be either
from transmitter $T_j$ or the receiver $R_j$ of link $l_j$ through
the DATA or ACK transmission on link $l_j$, respectively. So we use
the notation $S_j$ to denote the sender of link $l_j$, which could
be either $T_j$ or $R_j$.

\subsection{Carrier Sensing in 802.11}

Consider the wireless link set $\mathcal{L}$. If there exists a link
$l_i\in\mathcal{L}$ such that not both \eqref{SINR1} and
\eqref{SINR2} are satisf\/ied, collision happens. In 802.11, carrier
sensing can be used to prevent simultaneous transmissions that
collide.

We assume carrier sensing by energy detection. That is, if the power
received from another node is above a power threshold $P_{cs}$, then
a transmitter will not transmit and its backoff countdown process
will be frozen. Given a carrier sensing power threshold $P_{cs}$, it
can be mapped to a carrier-sensing range \emph{CSRange}. Consider
two links, $l_i$ and $l_j$. If the transmitters $T_i$ and $T_j$ can
carrier-sense the frames transmitted by each other, simultaneous
transmissions by them will be prevented. That is, if the distance
between $T_i$ and $T_j$ satisf\/ies
\begin{equation}
d(T_i,T_j)<CSRange, \label{eqCSRange}
\end{equation}
the DATA frame transmissions on $l_i$ and $l_j$ are prevented
beforehand.

Setting an appropriate \emph{CSRange} is crucial to the performance
of 802.11 network. If \emph{CSRange} is too large, spatial reuse
will be unnecessarily limited. If \emph{CSRange} is not large
enough, the missed-carrier-sensing collisions may occur. That is a
number of transmitters may transmit simultaneously because condition
\eqref{eqCSRange} is not satisfied by all pairs of the transmitters.
However, there may exist one link that not both conditions
\eqref{SINR1} and \eqref{SINR2} are satisf\/ied. In this case,
collisions happen and the carrier sensing fails to prevent such
collisions. We def\/ine a Safe \emph{CSRange} that will prevent the
missed-carrier-sensing collisions in 802.11 network under the
physical interference model as follows:
\begin{definition}[Safe-CSRange]
Consider the wireless link set $\mathcal{L}$. Let
$\mathcal{L}_{con}\subseteq \mathcal{L}$ denote a subset of links
that are allowed to transmit concurrently under a \emph{CSRange}
setting. Let $\mathcal{L}_{C}(\text{\emph{CSRange}})
=\{\mathcal{L}_{con}\}$ denote all such subsets of links. A
\emph{CSRange} is said to be a \emph{Safe-CSRange} if for any
$\mathcal{L}_{con}\in \mathcal{L}_{C}(\text{\emph{CSRange}})$ such
that for any link $l_i\in\mathcal{L}_{con}$, both the conditions
\eqref{SINR1} and \eqref{SINR2}, with
$\mathcal{L}'=\mathcal{L}''=\mathcal{L}_{con}\setminus \{l_i\}$, are
satisf\/ied.
\end{definition}

In the following analysis, we assume that the background noise power
$N$ can be ignored.

\section{Sufficient Condition on \emph{Safe-CSRange}}

In this section, we will derive a suff\/icient large value on the
\emph{Safe-CSRange} that will prevent the missed-carrier-sensing
collisions in 802.11.

It is shown in \cite{LiBin} that although the \emph{CSRange} is
suff\/icient large for the SINR requirements of all nodes,
transmission failures can still occur due to the ``Receiver-Capture
effect''. Consider two links $l_i$ and $l_j$ such that $T_i$ and
$T_j$ are out of the \emph{CSRange}, but $R_j$ is in the
\emph{CSRange} of $T_i$. If $T_i$ transmits f\/irst, then $R_j$ will
have sensed the signal of $T_i$ and the default operation in most
802.11 products is that $R_j$ will not attempt to receive the later
signal from $T_j$, even if the signal from $T_j$ is stronger. This
will cause the transmission on link $l_j$ to fail. It is further
shown in \cite{LiBin} that no matter how large the \emph{CSRange}
is, we can always come up with an example that give rise to
transmission failures, if the ``Receiver-Capture effect'' is not
dealt with properly. This kind of collisions can be solved with a
receiver ``RS(Re-Start) mode'' which can be enabled in some 802.11
products (e.g., Atheros WiFi chips). With RS mode, a receiver will
switch to receive the stronger packet as long as the SINR threshold
${\gamma}_0$ for the later link can be satisf\/ied.

With the receiver's RS mode, we can derive the \emph{Safe-CSRange}
that will prevent collisions in 802.11 network under the Physical
Interference Model.

\begin{theorem}
\label{SafeCSRange} Consider the wireless link set $\mathcal{L}$.
The suff\/icient condition on the \emph{Safe-CSRange} that will
prevent collisions in 802.11 network under the Physical Interference
Model is:
\begin{equation}
\text{\emph{Safe-CSRange}}=(K+2)d_{\max} \label{SafeCSRangeReq},
\end{equation}
where $d_{\max}=\mathop {\max }\limits_{l_i \in
\mathcal{L}}d(T_i,R_i)$, which is the maximum link length in the
network, and
\begin{equation}
K = \left( {6\gamma_0 \left( {1 + \left( {\frac{2}{\sqrt 3 }}
\right)^\alpha \frac{1}{\alpha - 2}} \right)}
\right)^{\frac{1}{\alpha }}. \label{KReq}
\end{equation}
\end{theorem}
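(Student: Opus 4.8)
The plan is to reduce the statement to a purely geometric packing inequality about the positions of the concurrently transmitting nodes, and then prove that inequality. Fix an arbitrary admissible concurrent set $\mathcal{L}_{con}\in\mathcal{L}_{C}(\text{\emph{Safe-CSRange}})$ and an arbitrary $l_i\in\mathcal{L}_{con}$. By the RS-mode discussion preceding the theorem, the only way a transfer on $l_i$ can fail is an SINR violation, so it suffices to verify \eqref{SINR1} and \eqref{SINR2} with $\mathcal{L}'=\mathcal{L}''=\mathcal{L}_{con}\setminus\{l_i\}$; since $N$ is neglected and $G(R_i,T_i)=G(T_i,R_i)=d(T_i,R_i)^{-\alpha}$, each of these is equivalent to a bound of the form
\[
\sum_{l_j\in\mathcal{L}_{con}\setminus\{l_i\}} d(S_j,v)^{-\alpha}\;\le\;\frac{d(T_i,R_i)^{-\alpha}}{\gamma_0},
\]
where $v=R_i$ for the DATA frame and $v=T_i$ for the ACK frame. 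Because $d(T_i,R_i)\le d_{\max}$, it is enough to prove the stronger inequality with $d(T_i,R_i)$ replaced by $d_{\max}$, so the whole theorem reduces to showing $\sum_j d(S_j,v)^{-\alpha}\le d_{\max}^{-\alpha}/\gamma_0$ for the geometry at hand.

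Second, I would extract the geometry forced by carrier sensing. For any two links in $\mathcal{L}_{con}$ we have $d(T_i,T_j)\ge\text{\emph{Safe-CSRange}}=(K+2)d_{\max}$. Using $d(T_j,S_j)\le d_{\max}$ (the sender $S_j$ is an endpoint of $l_j$) and $d(T_i,v)\le d_{\max}$ (both $R_i$ and $T_i$ are endpoints of $l_i$), the triangle inequality gives $d(S_j,v)\ge(K+2)d_{\max}-2d_{\max}=Kd_{\max}$ for every interferer, and likewise $d(S_j,S_k)\ge Kd_{\max}$ for every pair of interferers. Writing $r_0:=Kd_{\max}$, the interfering senders therefore form a set of points that are pairwise at distance $\ge r_0$ and each at distance $\ge r_0$ from $v$; note this was derived without assuming whether $v$ is a transmitter or a receiver, so the ACK case is handled identically to the DATA case.

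Third — the heart of the argument — I would bound $\sum_j d(S_j,v)^{-\alpha}$ for any such point set by a disk-packing estimate in the plane. Partition the complement of the open disk of radius $r_0$ about $v$ into concentric annuli. The innermost shell can contain at most $6$ senders, since at most $6$ points can be mutually $\ge r_0$ apart and all at distance $\ge r_0$ from a common point; these contribute at most $6r_0^{-\alpha}$. For the remaining senders, disjoint disks of radius $r_0/2$ centered at them (disjoint by the $r_0$-separation) can be packed with density no larger than that of the triangular lattice, i.e. at most one sender per area $\tfrac{\sqrt3}{2}r_0^2$; comparing a sender's distance to $v$ with the distances of the points of its hexagonal Voronoi cell, whose circumradius is $\tfrac{r_0}{\sqrt3}=\tfrac12\cdot\tfrac{2}{\sqrt3}\,r_0$, and summing the resulting integral $\int r^{1-\alpha}\,dr$ over the annuli — which converges precisely because $\alpha>2$ and produces the factor $\tfrac1{\alpha-2}$ — yields a bound $6\bigl(\tfrac{2}{\sqrt3}\bigr)^{\alpha}\tfrac1{\alpha-2}\,r_0^{-\alpha}$ for this part. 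Altogether
\[
\sum_j d(S_j,v)^{-\alpha}\;\le\;\frac{6}{r_0^{\alpha}}\left(1+\Bigl(\tfrac{2}{\sqrt3}\Bigr)^{\alpha}\frac1{\alpha-2}\right),
\]
and substituting $r_0=Kd_{\max}$ with $K$ as in \eqref{KReq} makes the right-hand side exactly $d_{\max}^{-\alpha}/\gamma_0$, which closes the proof.

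The main obstacle is this third step: the packing bound must be tight, because the constant $K$ (hence the multiplicative factor advertised in the abstract) is exactly what it produces. Crude estimates — e.g. enclosing all senders within distance $R$ of $v$ in a ball and counting disjoint $r_0/2$-disks — give the correct qualitative form but a noticeably worse constant than $6\bigl(1+(2/\sqrt3)^{\alpha}\tfrac1{\alpha-2}\bigr)$; obtaining the stated value needs the optimal planar (triangular-lattice) packing density, careful bookkeeping of the annulus boundaries, and a clean justification of the ``$6$'' for the innermost shell. A secondary point to handle with care is the reduction in the first step, namely arguing that with the RS mode enabled a concurrent set $\mathcal{L}_{con}$ genuinely produces the additive interference written in \eqref{SINR1} and \eqref{SINR2} at $v$, so that an SINR violation is indeed the only possible failure mode.
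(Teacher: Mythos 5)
Your proposal follows essentially the same route as the paper's proof: reduce via the RS mode to checking the SIR conditions with $N$ ignored, use the carrier-sensing constraint $d(T_i,T_j)\ge (K+2)d_{\max}$ together with the triangle inequality to obtain the minimum separation $Kd_{\max}$ between any interfering sender and the victim node, and then bound the cumulative interference by the worst-case hexagonal packing, which produces exactly the constant $6\bigl(1+\bigl(\tfrac{2}{\sqrt3}\bigr)^{\alpha}\tfrac{1}{\alpha-2}\bigr)$ and hence $K$. The only difference is bookkeeping in the packing step---the paper counts at most $6n$ interferers in the $n$th hexagonal layer at distance at least $\tfrac{\sqrt3}{2}nKd_{\max}$ and bounds $\sum_{n\ge2}n^{1-\alpha}$ by $\tfrac{1}{\alpha-2}$, whereas you treat the far senders via a triangular-lattice density/Voronoi integral---which is the same estimate at the same (worst-case packing) level of rigor as the paper.
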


\begin{proof}
With the receiver's RS mode, in order to prevent collisions in
802.11 networks, we only need to show that condition
\eqref{SafeCSRangeReq} is suff\/icient to satisfy both the SIR
requirements \eqref{SINR1} and \eqref{SINR2} of all the concurrent
transmission links.

With the transmitter-side carrier sensing in 802.11, concurrent
transmissions can only happen when the transmitters do not carrier
sense each other. Let $\mathcal{L}_{con}$ denote the set of links
which have concurrent transmissions. Consider any two links $l_i$
and $l_j$ in $\mathcal{L}_{con}$, we have the following inequality:
\begin{equation}
d(T_j,T_i)\geq\text{\emph{Safe-CSRange}}=(K+2)d_{\max}.
\end{equation}
Because both the lengths of links $l_i$ and $l_j$ satisfy
\begin{equation}
d(T_i,R_i)\leq d_{\max}, \quad d(T_j,R_j)\leq d_{\max},\nonumber
\end{equation}
using triangular inequality, we have
\begin{align}
d(T_j,R_i)&\geq d(T_j,T_i)-d(T_i,R_i)\nonumber\\
&\geq (K+2)d_{\max }-d_{\max }=(K+1)d_{\max },
\end{align}

\begin{align}
d(R_j,T_i)&\geq d(T_i,T_j)-d(T_j,R_j)\nonumber\\
&\geq (K+2)d_{\max }-d_{\max }=(K+1)d_{\max },
\end{align}
and also
\begin{align}
d(R_j,R_i)&\geq d(R_i,T_j)-d(T_j,R_j)\nonumber\\
&\geq (K+1)d_{\max }-d_{\max }=Kd_{\max }.
\end{align}

We take the most conservative distance $Kd_{\max }$ in our
interference analysis (i.e., we will pack the interference links in
a tightest manner given the \emph{CSRange} in
\eqref{SafeCSRangeReq}). Consider any two links $l_i$ and $l_j$ in
$\mathcal{L}_{con}$. The following four inequalities are
satisf\/ied:
\begin{align}
d(T_i,T_j)&\geq Kd_{\max },\\
d(T_i,R_j)&\geq Kd_{\max },\\
d(T_j,R_i)&\geq Kd_{\max },\\
d(R_i,R_j)&\geq Kd_{\max }.
\end{align}

Consider any link $l_i$ in $\mathcal{L}_{con}$. We will show that
the SIR requirements for both the DATA frame and the ACK frame can
be satisf\/ied. We first consider the SIR requirement of the DATA
frame. The SIR at $R_i$ is:
\begin{align}
SIR=\frac{Pd^{ - \alpha }\left( {T_i ,R_i }
\right)}{\sum\limits_{l_j\in \mathcal{L}_C,j \ne i} {Pd^{ - \alpha
}\left( {S_j ,R_i } \right)} }
\end{align}

For the received signal power we consider the worst case that
$d(T_i,R_i)=d_{\max }$.  So we have
\begin{equation}
Pd^{ - \alpha }\left( {T_i ,R_i } \right)\geq P\cdot{d_{\max }^{-
\alpha }}.\label{sigP}
\end{equation}

To calculate the cumulative interference power, we consider the
worst case that all the other concurrent transmission links have the
densest packing, in which the link lengths of all the other
concurrent transmission links are equal to zero. In this case, the
links degenerates to nodes. The minimum distance between between any
two links in $\mathcal{L}_{con}$ is $Kd_{\max }$. The densest
packing of nodes with minimum distance requirement is the hexagon
packing (as shown in Fig. \ref{topo}).

\begin{figure}[t]
\begin{center}
\includegraphics [height=5.8cm]{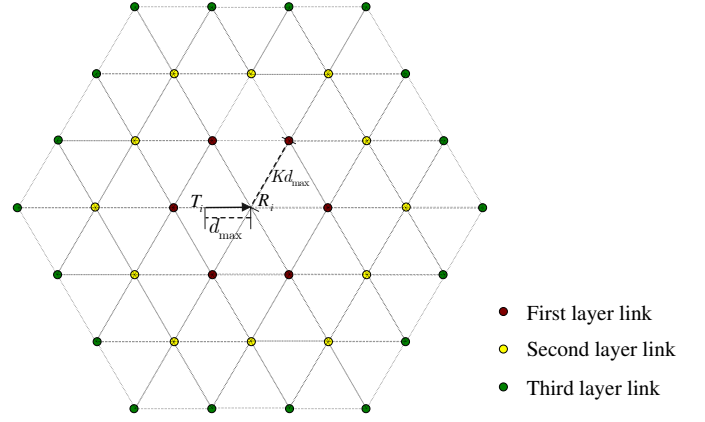}
\end{center}
\begin{center}
\caption{The packing of the interfering links in the worst case}
\label{topo}
\end{center}
\end{figure}

If link $l_j$ is the f\/irst layer neighbor link of link $l_i$, we
have $d(S_j,R_i)\geq Kd_{\max }$. Thus we have
\begin{equation*}
{Pd^{ - \alpha }\left( {S_j ,R_i } \right)}  \leq {P{\left(Kd_{\max
}\right)}^{ - \alpha }}  = \frac{1}{K^{\alpha}}\cdot P{d_{\max }^{-
\alpha }},
\end{equation*}
and there are at most 6 neighbor links in the first layer.

If link $l_j$ is the second layer neighbor link of link $l_i$, we
have $d(S_j,R_i)\geq \sqrt{3} Kd_{\max }$. Thus we have
\begin{equation*}
{Pd^{ - \alpha }\left( {S_j ,R_i } \right)}  \leq {P{\left(\sqrt{3}
Kd_{\max }\right)}^{ - \alpha }}  = \frac{1}{\left(\sqrt{3}
K\right)^{\alpha}}  P{d_{\max }^{- \alpha }},
\end{equation*}
and there are at most 12 neighbor links in the second layer.

If link $l_j$ is the $n$th layer neighbor link of link $l_i$, we
have $d(S_j,R_i)\geq \frac{\sqrt{3}}{2}n\cdot Kd_{\max }$. Thus we
have
\begin{equation*}
{Pd^{ - \alpha }\left( {S_j ,R_i } \right)}  \leq
{P{\left(\frac{\sqrt{3}}{2}n Kd_{\max }\right)}^{ - \alpha }}
=\frac{1}{\left(\frac{\sqrt{3}}{2}n K\right)^{\alpha}} P{d_{\max
}^{- \alpha }},
\end{equation*}
and there are at most $6n$ neighbor links in the second layer.

So the cumulative interference power satisf\/ies the following
inequality:
\begin{align}
&{\sum\limits_{l_j\in \mathcal{L}_C,j \ne i} {Pd^{ - \alpha }\left(
{S_j ,R_i } \right)} }\nonumber\\
\leq &\left(6 \cdot \left( {\frac{1}{K}} \right)^\alpha \mbox{ +
}\sum\limits_{\mbox{n = 2}}^\infty {6n\left( {\frac{2}{\sqrt 3 nK}}
\right)^\alpha }\right)\cdot P{d_{\max }^{- \alpha }} \nonumber\\
= & 6 \cdot \left( {\frac{1}{K}} \right)^\alpha \left( {\mbox{1 +
}\sum\limits_{\mbox{n = 2}}^\infty {n\left( {\frac{2}{\sqrt 3 n}}
\right)^\alpha } } \right)\cdot P{d_{\max }^{- \alpha }}\nonumber\\
= &6 \cdot \left( {\frac{1}{K}} \right)^\alpha \left( {\mbox{1 +
}\left( {\frac{2}{\sqrt 3 }} \right)^\alpha \sum\limits_{\mbox{n =
2}}^\infty {n\left( {\frac{1}{n}} \right)^\alpha } } \right)\cdot
P{d_{\max }^{- \alpha }}\nonumber\\
= &6 \cdot \left( {\frac{1}{K}} \right)^\alpha \left( {\mbox{1 +
}\left( {\frac{2}{\sqrt 3 }} \right)^\alpha \sum\limits_{\mbox{n =
2}}^\infty {\frac{1}{n^{\alpha - 1}}} } \right)\cdot P{d_{\max }^{-
\alpha }}\nonumber\\
\le & 6 \cdot \left( {\frac{1}{K}} \right)^\alpha \left( {\mbox{1 +
}\left( {\frac{2}{\sqrt 3 }} \right)^\alpha \frac{1}{\alpha - 2}}
\right)\cdot P{d_{\max }^{- \alpha }}\label{boundrie}\\
=&\frac{P{d_{\max }^{- \alpha }}}{\gamma_0},\label{follK}
\end{align}
where \eqref{boundrie} follows from a bound on Riemann's zeta
function and \eqref{follK} follows from the definition of $K$ in
\eqref{KReq}.

According to \eqref{sigP} and \eqref {follK}, we find that the SIR
of the DATA frame of link $l_i$ at the receiver $R_i$ satisf\/ies:
\begin{equation}
SIR=\frac{Pd^{ - \alpha }\left( {T_i ,R_i }
\right)}{\sum\limits_{l_j\in \mathcal{L}_C,j \ne i} {Pd^{ - \alpha
}\left( {S_j ,R_i } \right)} } \geq \frac{P\cdot{d_{\max }^{- \alpha
}}}{\frac{P{d_{\max }^{- \alpha }}}{\gamma_0}}=\gamma_0.
\end{equation}
This means the SIR requirement of the successful transmission of the
DATA frame on link $l_i$ can be satisf\/ied.

The proofs that the SIR requirement of the ACK frame on link $l_i$
can be satisf\/ied follow the similar procedure as above. So for any
link $l_i$ in the concurrent transmission links $\mathcal{L}_{con}$,
condition \eqref{SafeCSRangeReq} is suff\/icient to satisfy the SIR
requirements of the successful transmissions of both its DATA and
ACK frames. This means that, together with the receiver's RS mode,
condition \eqref {SafeCSRangeReq} is suff\/icient for preventing
collisions in 802.11 networks under the physical interference model.
\end{proof}

Condition \eqref{SafeCSRangeReq} provides a suff\/iciently large
\emph{CSRange} that prevents missed-carrier-sensing collisions in
802.11 networks. So there is no need to set a \emph{CSRange} larger
than \eqref{SafeCSRangeReq} in order to prevent collisions. Setting
a larger \emph{CSRange} than \eqref{SafeCSRangeReq} will only
decrease spatial reuse.

\section{Comparison of the \emph{Safe-CSRange} with the Pairwise Interference
Model}

This section compares the \emph{Safe-CSRange} under the physical
interference model to that under the pairwise interference model
(derived in \cite{LiBin}). Let
\emph{Safe-CSRange}$_\text{\emph{pairwise}}$ denote the
\emph{CSRange} that prevents hidden-node collisions under the
pairwise interference model. Let
\emph{Safe-CSRange}$_\text{\emph{physical}}$ denote the
\emph{CSRange} that prevents missed-carrier-sensing collisions under
the physical interference model. The
\emph{Safe-CSRange}$_\text{\emph{pairwise}}$ and the
\emph{Safe-CSRange}$_\text{\emph{physical}}$ are:
\begin{align}
&\text{\emph{Safe-CSRange}}_\text{\emph{pairwise}}=\left(2+
{\gamma_0}^{\frac{1}{\alpha }}\right)d_{\max}
\label{SafeCSRangePro},\\
&\text{\emph{Safe-CSRange}}_\text{\emph{physical}}=(K+2)d_{\max}\nonumber\\
&=\left(2+ \left( {6\gamma_0 \left( {1 + \left( {\frac{2}{\sqrt 3 }}
\right)^\alpha \frac{1}{\alpha - 2}} \right)}
\right)^{\frac{1}{\alpha }}\right)d_{\max} \label{SafeCSRangePhy}.
\end{align}

For example, if $\gamma_0=10$ and $\alpha=4$, which are typical for
wireless communications, we have
\begin{align}
&\text{\emph{Safe-CSRange}}_\text{\emph{pairwise}}=3.78d_{\max}
\label{SafeCSRangePro104},\\
&\text{\emph{Safe-CSRange}}_\text{\emph{physical}}=5.27d_{\max}
\label{SafeCSRangePhy104}.
\end{align}
The \emph{Safe-CSRange} needs to be increased by a factor of $1.4$
to ensure successful transmissions under the physical interference
model.

Given a fixed path loss exponent $\alpha$, both
\emph{Safe-CSRange}$_\text{\emph{pairwise}}$ and
\emph{Safe-CSRange}$_\text{\emph{physical}}$ will increase when the
SIR requirement $\gamma_0$ increases. This is the case when the
separation among links must be larger to meet the SIR targets. For
example, if $\alpha=4$, we have
\begin{align}
&\text{\emph{Safe-CSRange}}_\text{\emph{pairwise}}=\left(2+\gamma_0^{\frac{1}{4}}\right)d_{\max}
\label{SafeCSRangePro104},\\
&\text{\emph{Safe-CSRange}}_\text{\emph{physical}}=\left(2+\left(\frac{34}{3}\gamma_0\right)^{\frac{1}{4}}\right)d_{\max}
\label{SafeCSRangePhy104}.
\end{align}
The ratio of \emph{Safe-CSRange}$_\text{\emph{physical}}$ to
\emph{Safe-CSRange}$_\text{\emph{pairwise}}$ is
\begin{equation}
\mathop
\frac{\text{\emph{Safe-CSRange}}_\text{\emph{physical}}}{\text{\emph{Safe-CSRange}}_\text{\emph{pairwise}}}
=  \frac{2 + \left( {\frac{34}{3}\gamma_0 } \right)^{\frac{1}{4}}}{2
+ \gamma_0 ^{\frac{1}{4}}},
\end{equation}
which is also an increasing function of $\gamma_0$. And the limit of
the ratio of \emph{Safe-CSRange}$_\text{\emph{physical}}$ to
\emph{Safe-CSRange}$_\text{\emph{pairwise}}$ as $\gamma_0$ goes to
inf\/inity is:
\begin{align}
\mathop {\lim }\limits_{\gamma_0 \to \infty }
\frac{\text{\emph{Safe-CSRange}}_\text{\emph{physical}}}{\text{\emph{Safe-CSRange}}_\text{\emph{pairwise}}}
&= \mathop {\lim }\limits_{\gamma_0 \to \infty } \frac{2 + \left(
{\frac{34}{3}\gamma_0 } \right)^{\frac{1}{4}}}{2 + \gamma_0
^{\frac{1}{4}}}\nonumber \\&= \left( {\frac{34}{3}}
\right)^{\frac{1}{4}} \approx 1.8348
\end{align}

Fig. \ref{CSRangeratio} shows the ratio
$\frac{\text{\emph{Safe-CSRange}}_\text{\emph{physical}}}{\text{\emph{Safe-CSRange}}_\text{\emph{pairwise}}}$
as a function of the SIR requirements $\gamma_0$. Different curves
represent different choices of path loss exponent $\alpha$.

\begin{figure}[t]
\begin{center}
\includegraphics [height=7.1cm]{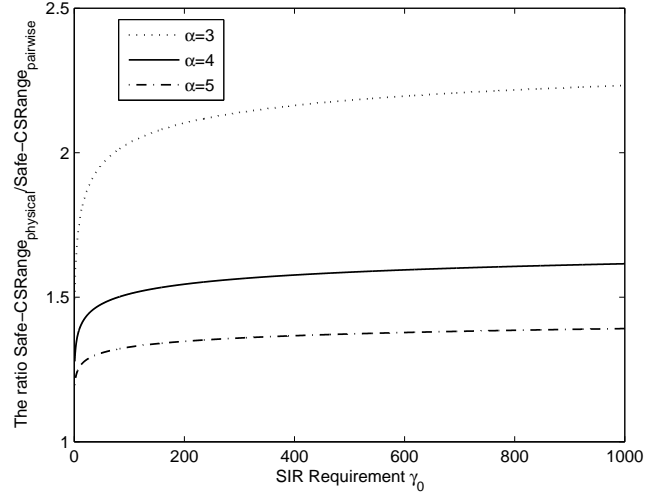}
\end{center}
\begin{center}
\caption{Analytical results of the ratio of the \emph{Safe-CSRange}
under the physical interference model to that under the pairwise
interference model} \label{CSRangeratio}
\end{center}
\end{figure}

\section{Conclusion}\label{conclusion}

This paper studies the problem of setting the carrier-sensing range
that will prevent missed-carrier-sensing collisions in 802.11
networks under the physical interference model. We establish a
suff\/icient condition for the carrier sensing range. We call the
minimum carrier sensing range that meets the condition
\emph{Safe-CSRange} and compare it with that established under the
pairwise interference model \cite{LiBin}. We find that for the
typical setting of path-loss exponent $\alpha=4$ and SINR
requirement $\gamma_0=10$, the \emph{Safe-CSRange} needs to be
increased by a factor of $1.4$ under the physical interference
model. We also find that, given a fixed path-loss exponent $\alpha$,
the factor increases when the SINR requirement $\gamma_0$ increases.
And the factor tends to a constant as the SINR requirement
$\gamma_0$ goes to infinity. For example, when the path-loss
exponent $\alpha=4$, the limit of the factor is $1.84$ as the SINR
requirement $\gamma_0$ goes to infinity.

\end{document}